\newcommand\@shorttitle{}
\newcommand\shorttitle[1]{\renewcommand\@shorttitle{#1}}
\renewenvironment{abstract}{ 
  \centerline
  {\large\sffamily\bfseries Abstract}\vspace{-1em}
  \begin{quote}\small
}{
  \end{quote}
}
\newcommand{\assumplabel}[2]{%
   \protected@write \@auxout {}{\string\newlabel{#1}{{#2}{\thepage}{#2}{#1}{}}}%
   \hypertarget{#1}{#2}%
}
\providecommand{\tightlist}{%
  \setlength{\itemsep}{0pt}\setlength{\parskip}{0pt}}\usepackage{longtable,booktabs,array}
\patchcmd\longtable{\par}{\if@noskipsec\mbox{}\fi\par}{}{}
\def\maxwidth{\ifdim\Gin@nat@width>\linewidth\linewidth\else\Gin@nat@width\fi}
\def\maxheight{\ifdim\Gin@nat@height>\textheight\textheight\else\Gin@nat@height\fi}
\def\fps@figure{htbp}
\theoremstyle{plain}
\newcommand{\R}{\ensuremath{\mathbb{R}}}
\DeclareMathOperator*{\argmax}{arg\,max}
\newcommand{\cvas}{\xrightarrow{\;\!a.s.\:\!}}
\newcommand{\M}{\mathcal{M}}
  \renewcommand*\contentsname{Table of contents}
  \newcommand\contentsname{Table of contents}
  \renewcommand*\listfigurename{List of Figures}
  \newcommand\listfigurename{List of Figures}
  \renewcommand*\listtablename{List of Tables}
  \newcommand\listtablename{List of Tables}
  \renewcommand*\figurename{Figure}
  \newcommand\figurename{Figure}
  \renewcommand*\tablename{Table}
  \newcommand\tablename{Table}
\newenvironment{CSLReferences}[2]{ 
\bibliography{references.bib}
\clearpage
}{}
\title{\sffamily\bfseries\huge\parfillskip=0pt
\rightskip=0pt plus .5\textwidth
\leftskip=0pt plus .5\textwidth
\emergencystretch=.3\textwidth Finding Pareto Efficient Redistricting
Plans with Short Bursts}
\shorttitle{Finding Pareto Efficient Redistricting Plans with Short Bursts}
\author{\textbf{Cory McCartan}\footnote{
To whom correspondence should be addressed.
Email: \texttt{\href{mailto:cmccartan@g.harvard.edu}{cmccartan@g.harvard.edu}}.
Website: \url{https://corymccartan.com}.
Address:
1 Oxford St, Cambridge, MA 02138.
The author thanks Christopher Kenny and Tyler Simko for helpful comments
and suggestions.}
\\Department of Statistics%
\\Harvard University%
\vspace{2pt}
 }
\date{May 27, 2024}
\begin{document}
\allsectionsfont{\sffamily}

\maketitle

\begin{abstract}
Redistricting practitioners must balance many competing constraints and
criteria when drawing district boundaries. To aid in this process,
researchers have developed many methods for optimizing districting plans
according to one or more criteria. This research note extends a
recently-proposed single-criterion optimization method, \emph{short
bursts} \citep{cannon2023voting}, to handle the multi-criterion case,
and in doing so approximate the Pareto frontier for any set of
constraints. We study the empirical performance of the method in a
realistic setting and find it behaves as expected and is not very
sensitive to algorithmic parameters. The proposed approach, which is
implemented in open-source software, should allow researchers and
practitioners to better understand the tradeoffs inherent to the
redistricting process.
\end{abstract}

\textbf{\textit{Keywords}}\quad redistricting~\textbullet~optimization~\textbullet~Markov
chain~\textbullet~Pareto efficiency


\section{Introduction}\label{sec-intro}

Legislative districts in the U.S. must satisfy a wide variety of
constraints and criteria, which vary across states and municipalities,
in addition to several federal statutory and constitutional criteria
\citep{ncslcriteria}. Many recent advances in the quantitative analysis
of redistricting have involved introducing techniques for generating
sample districting plans which satisfy these constraints. These
techniques fall into two broad categories: sampling algorithms and
optimization algorithms.

Sampling algorithms aim to generate a random sample of districting plans
which meet a set of constraints. These algorithms include ad-hoc methods
\citep{cirincione2000, chen2013, magleby2018} as well as algorithms that
are designed to sample from a specific probability distribution
\citep{mattingly2014, wu2015, deford2019, carter2019, fifield2020automated, mccartan2020, cannon2022spanning};
most of which are based on Markov chain Monte Carlo techniques.

In contrast, optimization algorithms are designed to produce a
districting plan or plans which score well on predefined numerical
criteria. Informally, while sampling algorithms can give an idea of a
``typical'' districting plan, optimization algorithms can find more
extreme or even close-to-optimal districting plans along certain
dimensions. Many different optimization schemes have been proposed in
the literature
\citep{mehrotra1998, macmillan2001, bozkaya2003, altman2011bard, liu2016pear, rincon2013multiobjective, gurnee2021fairmandering, swamy2022multiobjective}.
One recent development by \citet{cannon2023voting} has been optimization
by \emph{short bursts}, which has shown particular promise in maximizing
minority vote share across many districts. The short burst algorithm
operates by running a Markov chain on the space of districting plans for
a small number of steps, then restarting it from the step which scored
highest according to a prespecified criterion.

One limitation of many of these optimization algorithms is that users
must specify a univariate scoring function. When multiple constraints
must be balanced against each other, practitioners often take a linear
combination of multiple scoring functions and carefully tune the weights
to achieve their desired goal. An alternative approach, which has been
previously proposed, is to use optimization or sampling algorithms to
find districting plans on the Pareto frontier for a set of constraints:
plans which cannot be improved on any one dimension without sacrificing
another dimension \citep{gerken2010getting, altman2018redistricting}.
The ability to identify a set of plans lying along or close to the
Pareto frontier can be a valuable tool for redistricting researchers and
practitioners alike in understanding the compatibility and tradeoffs
between various redistricting criteria.

While some optimization algorithms
\citep{rincon2013multiobjective, swamy2022multiobjective} are explicitly
designed to handle multiple constraints and approximate the Pareto
frontier, many, including optimization by short bursts, are not. This
research note provides a natural extension of the short burst algorithm
to optimize the entire Pareto frontier at once.

Section~\ref{sec-method} describes the problem formally and develops the
algorithmic extension, which is also implemented in open-source software
\citep{redist}. We examine the efficacy of the proposed algorithm in
studying congressional redistricting in Iowa in Section~\ref{sec-demo}.
We find that the proposed algorithm identifies an expanding Pareto
frontier as the number of bursts increases, that the algorithm's
performance is not sensitive to the burst size, and that multiple
independent runs of the algorithm for fixed parameters produce similar
results. Section~\ref{sec-conclude} concludes and discusses directions
for future work.

\section{Approximating the Pareto Frontier with Short
Bursts}\label{sec-method}

Let \(\Xi\) denote the (finite) collection of possible districting plans
in a state or city. In most algorithmic redistricting work, \(\Xi\) is
defined as the set of possible graph partitions of an adjacency graph
over geographic units such as precincts or Census tracts \citep[see,
e.g.,][]{mccartan2020}.

\subsection{Pareto ordering and
efficiency}\label{pareto-ordering-and-efficiency}

Suppose we have a scoring function \(f:\Xi\to\R^J\) that provides a
vector of \(J\) numerical scores for a given districting plan. We refer
a single element of the vector \(f_j\) as a districting criterion.
Without loss of generality we interpret larger values of each \(f_j\) to
be desirable for the practitioner. We can then define a strict partial
order on districting plans \(\xi\in\Xi\) as follows: \(\xi\prec\xi'\)
iff for all \(j\), \(f_j(\xi)\le f_j(\xi')\), and there exists a \(j\)
with \(f_j(\xi)<f_j(\xi')\). Note that when \(J=1\) this is in fact a
strict total order. This ordering is called the \emph{Pareto ordering};
when \(\xi\prec\xi'\) we say that \(\xi'\) \emph{Pareto dominates}
\(\xi\).

A plan is \emph{Pareto efficient} if there does not exist a distinct
\(\xi'\in\Xi\) that dominates it. The \emph{Pareto frontier} \(P(f)\) is
the set of all Pareto efficient plans. It is not possible to improve a
plan on the Pareto frontier (in the sense of increasing the value of a
criterion \(f_j\)) without decreasing the value of at least one
criterion.

\subsection{Pareto optimization by short
bursts}\label{pareto-optimization-by-short-bursts}

The Pareto frontier \(P(f)\) for a scoring function \(f:\Xi\to\R^J\) is
easily calculated from an enumeration of all possible districting plans:
simply discard plans in \(\Xi\) which are Pareto dominated by any other
plan. Unfortunately, in most realistic districting problems, \(\Xi\) is
too large to enumerate, and so analysts must resort to approximations.

Some researchers have generated an approximate Pareto frontier by
generating a large number of samples from \(\Xi\) using a redistricting
sampling algorithm targeting a ``neutral'' distribution \(\pi\) which
contains no information about \(f\), and then discarding samples which
are Pareto dominated by any other sampled plan
\citep{schutzman2020trade}. While this approach will consistently
generate the Pareto frontier as the number of samples goes to infinity,
it will not perform well in general in finite samples, because the
generation of the sample from \(\Xi\) is done without any knowledge of
the scoring function \(f\).

One could imagine instead generating samples from a target distribution
\(\pi_f\) which places more probability mass on plans that score well on
\(f\), and then discarding Pareto-dominated plans. For example,
\(\pi_f(\xi)=\exp(-\norm{f(\xi)})\) is one such distribution. As
\citet{cannon2023voting} show, however, a better approach yet might be
to generate the Pareto frontier over a series of ``short bursts.''

The original short bursts algorithm of \citet{cannon2023voting} is
described in Algorithm \ref{alg:sb}. Given an initial plan \(\xi_0\),
the short burst algorithm runs \(b\) steps of some Markov chain \(\M\)
on districting plans, then picks the best plan from the set of \(b+1\)
plans (including the initializing plan) to initialize the next
``burst.'' The chain \(\M\) can be any Markov chain on districting
plans, even one which does not satisfy detailed balance.
\citet{cannon2023voting} use the Markov chain of \citet{deford2019};
here, we use a substantially similar algorithm that is also
spanning-tree based and is implemented in the software of
\citet{redist}. The short burst algorithm is terminated when
\(f(\xi_0)\) reaches a predefined threshold, or when the total number of
bursts run hits a specified maximum.

\begin{algorithm}[ht]
\caption{Univariate optimization by short bursts}
\label{alg:sb}
\textit{Input}: initial plan $\xi_0$, univariate scoring function $f:\Xi\to\R$,
Markov chain $\M$, and burst size parameter $b$.

\textbf{Repeat} until termination:
\begin{enumerate}\tightlist
\item Run $\M$ for $b$ steps starting at $\xi_0$, producing plans $(\xi_1,\dots, \xi_b)$.
\item Set $k \leftarrow \argmax_{0\le i\le b} f(\xi_i)$.
\item Set $\xi_0 \leftarrow \xi_k$.
\end{enumerate}
\end{algorithm}

For a scoring function \(f:\Xi\to\R^J\), the Pareto frontier \(P(f)\)
can be approximated with Algorithm \ref{alg:sb} by running the algorithm
to completion on many different scoring functions \(f_{\vec w}\), where
\[
f_{\vec w}(\xi) = \sum_{j=1}^J w_j f_j(\xi),
\] for many different combinations of weights \(\vec w\) lying in a
\(J\)-simplex. However, this is computationally expensive, especially as
\(J\) grows.

Instead, we propose a simple modification to Algorithm \ref{alg:sb}.
Rather than selecting the best plan from the previous burst to
initialize the next burst, we will keep track of the set of sampled
plans which are not Pareto-dominated by any other observed plan, and
initializing each burst with a plan sampled from this Pareto efficient
set. This yields a generalized short burst algorithm for Pareto
optimization which is detailed in Algorithm \ref{alg:sbp}. The
univariate Algorithm \ref{alg:sb} is a special case of Algorithm
\ref{alg:sbp} with \(J=1\).

\begin{algorithm}[ht]
\caption{Pareto optimization by short bursts}
\label{alg:sbp}
\textit{Input}: set of initial plans $X$, scoring function $f:\Xi\to\R^J$,
Markov chain $\M$, and burst size parameter $b$.

\textbf{Repeat} until termination:
\begin{enumerate}\tightlist
\item Sample $\xi_0$ from $X$ uniformly at random.
\item Run $\M$ for $b$ steps starting at $\xi_0$, producing plans $(\xi_1,\dots, \xi_b)$.
\item Set $X\leftarrow X\cup \{\xi_1,\dots, \xi_b\}$.
\item Remove Pareto-dominated plans from $X$: set 
$X\leftarrow \{\xi\in X: \not\exists\ \xi'\in X \text{ with } \xi\prec\xi'\}$.
\end{enumerate}
\end{algorithm}

By initializing each burst with a sample from the yet-observed Pareto
frontier, Algorithm \ref{alg:sbp} is able to explore and improve the
entire Pareto frontier at any point. One could imagine generalizations
of the algorithm which place more weight on plans on the Pareto frontier
which the algorithm heuristically believes are easier to further
optimize. We leave such improvements to future work.

\subsection{Algorithm properties}\label{algorithm-properties}

Letting \(X_n\) denote the (random) output of Algorithm \ref{alg:sbp}
after \(n\) bursts, we can record several easily-verified properties of
the proposed algorithm. Proofs are deferred to the appendix.

\begin{restatable}{prop}{propconv}
Let $\M^b$ be the Markov chain obtained by taking $b$ steps from $\M$ at a time.
If the Markov chain $\M^b$ has a strictly positive transition probability between any pair of plans, then $X_n\cvas P(f)$ as $n\to\infty$.
\end{restatable}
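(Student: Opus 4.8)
The plan is to exploit the finiteness of $\Xi$ together with the positivity assumption on $\M^b$ to show that every Pareto-efficient plan is almost surely discovered in finite time, and that once the entire frontier has been discovered the set $X_n$ is absorbed at $P(f)$. Since $\Xi$ is finite, $\cvas$ here really means that almost surely there is a (random) burst index $\tau<\infty$ after which $X_n=P(f)$ for all $n\ge\tau$, and I would aim to establish this stronger eventual-equality statement.

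First I would record two pathwise facts about the recursion in Algorithm~\ref{alg:sbp}. \emph{Persistence}: a plan $\xi^*\in P(f)$ is by definition not dominated by any plan in $\Xi$, hence not by any plan in $X_n\subseteq\Xi$, so once $\xi^*$ is added in step 3 it is never deleted in step 4; thus $P(f)\cap X_n$ is nondecreasing in $n$. \emph{Absorption}: the key structural lemma is that in the finite poset $(\Xi,\prec)$ every dominated plan lies strictly below some maximal (Pareto-efficient) plan — follow any strictly increasing chain upward, which must terminate in $P(f)$ by finiteness. Consequently, if $P(f)\subseteq X_n$, then every $\xi\in X_n\setminus P(f)$ is dominated by some element of $P(f)\subseteq X_n$ and is removed in step 4, forcing $X_n=P(f)$; combined with persistence, this shows that $\{P(f)\subseteq X_n\}$ is an absorbing event, and on it $X_m=P(f)$ for all $m\ge n$.

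It then remains to show $\tau:=\inf\{n:P(f)\subseteq X_n\}$ is finite almost surely. Here I would use the positivity hypothesis: because $\Xi$ is finite and $\M^b$ has all entries positive, $\delta:=\min_{\xi,\xi'\in\Xi}\M^b(\xi,\xi')>0$. Fix any $\xi^*\in P(f)$ and let $T=\inf\{n:\xi^*\in X_n\}$. On $\{T>n\}$, the terminal state of burst $n+1$ satisfies $\Pr(\xi_b=\xi^*\mid\mathcal F_n)\ge\delta$ uniformly in the plan $\xi_0$ sampled in step 1, since $\M^b(\xi_0,\xi^*)\ge\delta$ for every $\xi_0$; and $\xi_b=\xi^*$ guarantees $\xi^*\in X_{n+1}$ by persistence. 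Hence $\Pr(T>n+1\mid\mathcal F_n)\le(1-\delta)\ind\{T>n\}$, giving $\Pr(T>n)\le(1-\delta)^n\to0$ and so $T<\infty$ a.s. A union bound over the finitely many plans in $P(f)$ then yields that all of them are discovered in finite time, i.e.\ $\tau<\infty$ a.s., and the absorption property finishes the argument.

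The main obstacle I anticipate is not the probabilistic estimate — which is a routine geometric/Borel--Cantelli bound once the uniform floor $\delta>0$ is extracted from finiteness and positivity — but rather pinning down the absorption step cleanly: one must verify the poset lemma guaranteeing that the \emph{only} set compatible with the discard rule of step 4 and containing the whole frontier is $P(f)$ itself, so that no non-efficient plan can linger once the frontier is complete. This is precisely where the finiteness of $\Xi$ and the exact definition of $\prec$ are indispensable.
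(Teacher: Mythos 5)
Your proof is correct and follows essentially the same route as the paper's: show each Pareto-efficient plan is eventually reached using the strict positivity of $\M^b$, note that once added it is never removed, and conclude that once $P(f)\subseteq X_n$ no inefficient plan can remain, so $X_n = P(f)$ thereafter. The only difference is one of rigor: your explicit uniform floor $\delta$ with the geometric tail bound, and your explicit finite-poset lemma (every dominated plan is dominated by some maximal plan) underpinning the absorption step, are both left implicit in the paper's shorter argument.
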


Unfortunately, all Markov chains developed to date to sample districting
plans do not have strictly positive transition probability between all
pairs of plans. Weaker conditions such as the irreducibility of \(\M\)
are not enough for convergence, either, as the following proposition
records.

\begin{restatable}{prop}{propirred}
\label{prop:irred}
For any $\Xi$, burst size $b\le |\Xi|-2$, and scoring function $f:\Xi\to\R$ which takes at least $b+2$ distinct values on  $\Xi$, there exists an irreducible Markov chain $\M$ on $\Xi$ such that $X_n\not\cvas P(f)$ as $n\to\infty$ for some initializing set $A$.
\end{restatable}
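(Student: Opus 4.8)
The plan is to use the $J=1$ structure to reduce the statement to a simple ``trapping'' construction. When $J=1$ the Pareto order is the strict total order induced by $f$, so $P(f)=f^{-1}(\max_\Xi f)$, and the key invariant is that the quantity $\max_{\xi\in X_n}f(\xi)$ is nondecreasing in $n$: each burst only enlarges the running set (step 3) before pruning (step 4). Hence $X_n\cvas P(f)$ can hold only if this running maximum attains $\max_\Xi f$ with probability one. I will therefore exhibit an irreducible $\M$ and an initial set $A$ for which the running maximum provably never reaches $\max_\Xi f$.

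For the construction, list the distinct values of $f$ as $v_1<\cdots<v_K$ with $K\ge b+2$, and set $u=v_{b+1}$, so that $u<v_K$. Pick one plan attaining each of $v_1,\dots,v_{b+1}$; these $b+1$ plans form a \emph{low block} $L$, and let $p^*\in L$ be the one with $f(p^*)=u$. Arrange all $|\Xi|$ plans as the vertices of a path graph $z_0,\dots,z_{|\Xi|-1}$ (edges $\{z_i,z_{i+1}\}$) with $z_0=p^*$ and $z_1,\dots,z_b$ the other members of $L$, so that $L$ occupies positions $0,\dots,b$; every remaining plan, in particular every plan of value $>u$ (including all of $P(f)$), is placed at a position $\ge b+1$. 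This is possible exactly because $b\le|\Xi|-2$ gives room for at least one such plan. Let $\M$ be the simple random walk on this path, reflecting at the two endpoints; it is irreducible. Take $A=\{p^*\}$.

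The verification is then short. Since each step of $\M$ changes the path position by at most one and position $0$ reflects, any length-$b$ trajectory started at $z_0$ stays within positions $0,\dots,b$, i.e.\ inside $L$, so every plan it visits has value $\le u$. Consequently the running maximum never exceeds $u$. Moreover $z_0$ is the unique visited plan of value $u$, so after pruning $X_n=\{p^*\}$ for every $n$; by induction the next burst always re-initializes at $p^*$ and the walk is confined to $L$ forever. Thus $X_n\neq P(f)$ for all $n$ (indeed with probability one), and so $X_n\not\cvas P(f)$.

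Because $f$ is given rather than chosen, the only leverage is the hypotheses $K\ge b+2$ and $b\le|\Xi|-2$; the main point requiring care is the choice $u=v_{b+1}$ together with a block $L$ of size exactly $b+1$, which is just large enough to absorb a full $b$-step burst started at its endpoint while still leaving $\max_\Xi f$ strictly outside it. The only step needing a genuine (if brief) argument is that the trap persists across \emph{all} bursts, not just one: this follows because the running best never leaves $L$, forcing every re-initialization to occur inside the barrier.
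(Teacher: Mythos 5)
Your proof is correct and takes essentially the same approach as the paper's: both construct an irreducible nearest-neighbor random walk along a linear ordering of $\Xi$ in which the initializing plan is separated from every higher-scoring plan by a moat of $b$ lower-valued plans, so that each $b$-step burst is confined to the trap and the greedy re-initialization can never escape. Your choice of a single representative per value (making $X_n=\{p^*\}$ identically) is a clean streamlining of the paper's construction, which instead keeps \emph{all} plans of the top $b+1$ trapped values and hence needs a small counting argument across value groups, but the underlying idea is the same.
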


While Algorithm \ref{alg:sbp} is therefore not guaranteed to produce the
Pareto frontier in all settings, even with infinite computing time,
given the promising results in \citet{cannon2023voting}, we expect it to
perform well in approximating the Pareto frontier in real-world
redistricting problems.

One qualitative difference in algorithm performance between the \(J=1\)
and \(J>1\) settings involves local minima. When \(J=1\) and the
algorithm finds itself in a local minimum, it is often unable to further
improve the scoring function. Multiple runs of the algorithm are
therefore often required to have confidence that any particular run has
avoided such a local minimum. While there is no guarantee of avoiding
local minima when \(J>1\), in practice such issues may be mitigated by
the fact that the algorithm randomly selects starting points along the
currently-estimated Pareto frontier for each burst. If one point along
the frontier lies in a local minimum, the algorithm can still improve
the overall frontier by starting bursts from other points. Of course,
the overall size of the Pareto frontier grows rapidly as \(J\) increases
(see appendix). As a result, each portion of the frontier may get less
overall ``attention'' per burst, leading to overall slower improvements
in any one region of the frontier.

\begin{figure}

\centering{

\includegraphics[width=0.75\textwidth,height=\textheight]{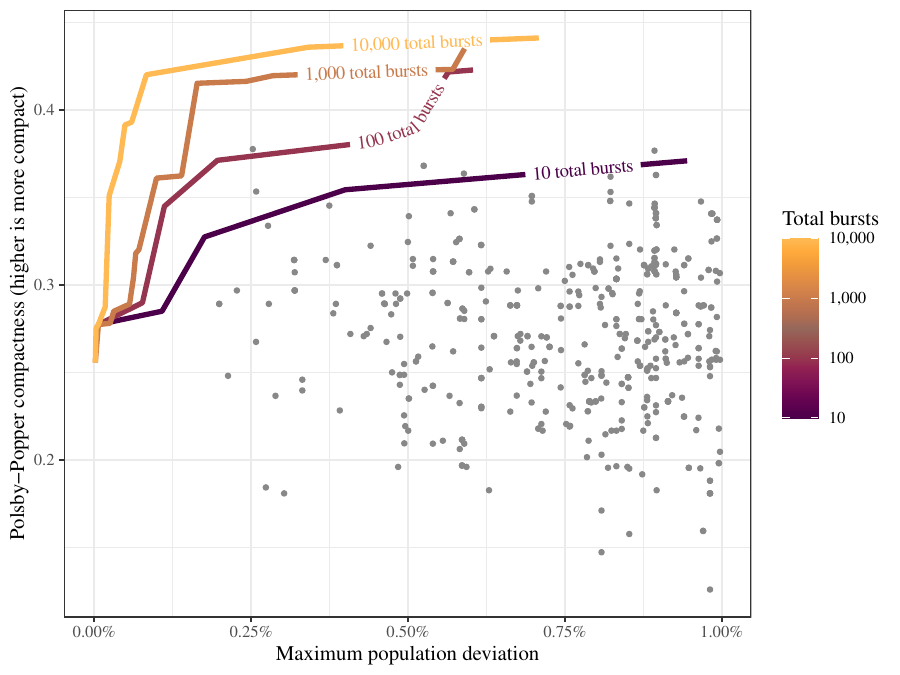}

}

\caption{\label{fig-demo}Pareto frontier estimated by the proposed
method across a range of total bursts (colored lines), plotted against
1,000 samples from an MCMC sampling algorithm using the same Markov
chain (grey points).}

\end{figure}%

\section{Empirical Demonstration}\label{sec-demo}

Here we apply the proposed Algorithm \ref{alg:sbp} to the problem of
congressional redistricting in the state of Iowa. Since 2010, Iowa has
been apportioned four congressional districts, and by law, these
districts must be comprised of whole counties, of which there are 99.
Traditionally, maximizing district compactness and minimizing the
deviation in populations across districts are far and away the two most
important criteria for districting plans in Iowa.

We measure district compactness with the \emph{Polsby--Popper} score
\citep{polsby1991third}, which is proportional to the ratio of a
district's area to its perimeter. To score an entire plan, we record the
minimum Polsby--Popper score (least compact) of all the districts: \[
\text{comp}(\xi) \coloneqq \min_{1\le i\le m} 
4\pi \frac{\text{area}_i(\xi)}{\text{perim}_i(\xi)^2},
\] where \(\text{area}_i(\xi)\) and \(\text{perim}_i(\xi)\) denote the
area and perimeter of district \(i\) in a plan \(\xi\). We measure
population equality with the \emph{maximum population deviation} score.
For districts in a plan \(\xi\) with populations \(N_1,\dots,N_m\) in a
state with total population \(N\), the population deviation score is
defined as \[
    \text{dev}(\xi) \coloneqq \max_{1\le i\le m} \abs{\frac{N_i - N/m}{N/m}},
\] that is, the maximum percentage deviation in any district from full
population equality. We can therefore write the overall scoring function
as \(f(\xi) = (-\text{dev}(\xi), \text{comp}(\xi))\).

\begin{figure}[htb]

\centering{

\includegraphics{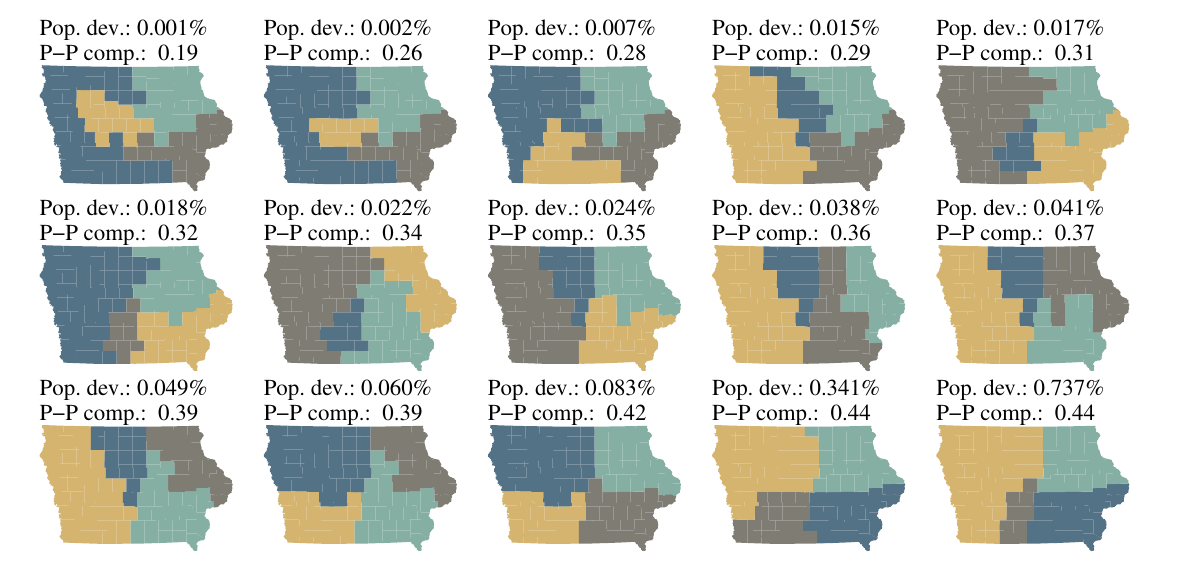}

}

\caption{\label{fig-maps}Plans along the Pareto frontier estimated with
10,000 bursts.}

\end{figure}%

First, we run Algorithm \ref{alg:sbp} with \(b=10\) and \(\xi_0\) set to
the enacted 2020 plan over a range of maximum bursts from 10 to 10,000.
The estimated Pareto frontiers from each of these runs of the algorithm
is plotted in Figure~\ref{fig-demo}. For comparison, we also run the
underlying Markov chain for 1,000 steps, and plot the sampled plans'
compactness and deviation scores on the same figure. Unsurprisingly, the
estimated Pareto frontier lie outside of (or close to) the convex hull
of the sampled plans' scores, even for small numbers of total bursts,

Figure~\ref{fig-maps} plots the 11 plans that define the Pareto frontier
estimated with 10,000 bursts. The minimum-deviation plan is relatively
noncompact compared to the rest of the frontier, though is typical
compared to the set of plans sampled directly from the Markov chain. As
we travel along the frontier, compactness increases substantially,
before reaching a relative plateau. In fact, there appears to be little
overall tradeoff between the two criteria, as indicated by the sharp
angle in the Pareto frontier. Across most of their range, each criterion
can be optimized with minimal effect on the other criterion; only at
extreme values must compactness and population equality be weighed
against each other. Compared to single-criterion optimizers or existing
sampling methods, the proposed algorithm enables qualitative findings
like this which may be useful to redistricting practitioners looking to
improve a districting plan along multiple dimensions.

\begin{figure}

\centering{

\includegraphics{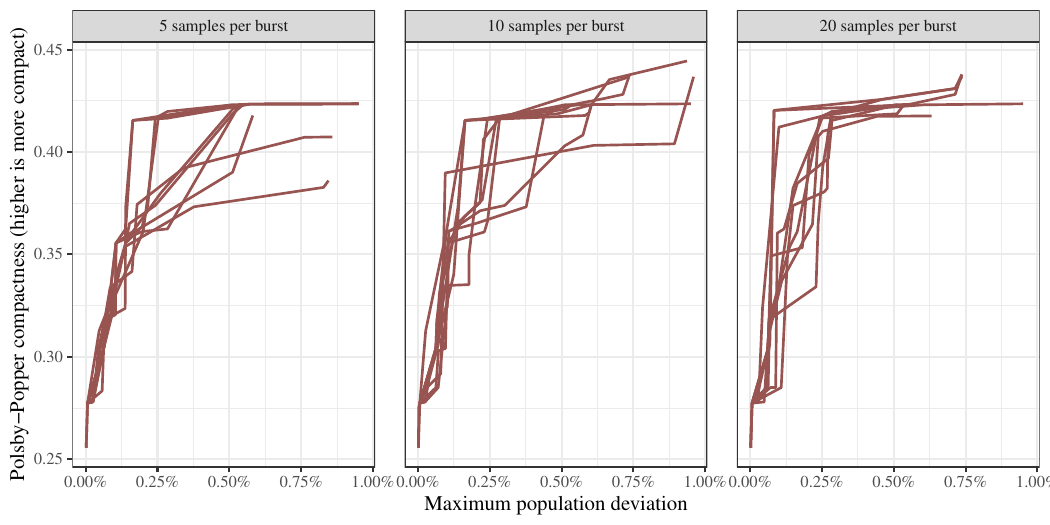}

}

\caption{\label{fig-burst-size}Estimated Pareto frontier across a range
of burst sizes, with 10 replications of 200 bursts each.}

\end{figure}%

Figure~\ref{fig-demo} plots the Pareto frontier as a function of the
total number of bursts. We can also examine the sensitivity of the
estimated frontier to the burst size parameter \(b\). For burst sizes
\(b\in\{5, 10, 20\}\), we run 10 replications of Algorithm \ref{alg:sbp}
for 200 bursts each. The resulting estimated Pareto frontiers are shown
in Figure~\ref{fig-burst-size}.

The frontiers are remarkably consistent across both replications and
varying burst sizes. The lack of sensitivity to the burst size over a
reasonable range of sizes was noted by \citet{cannon2023voting}, and it
is encouraging to see similar results in the multidimensional case. The
consistency across replications further provides confidence that running
Algorithm \ref{alg:sbp} a small-to-moderate number of times will
generally capture a reasonable Pareto frontier (for that computational
budget).

\section{Conclusion}\label{sec-conclude}

We have proposed a natural generalization of the short bursts algorithm
of \citet{cannon2023voting} which will allow practitioners and
researchers to estimate the Pareto frontier induced by a set of criteria
on districting plans. In a demonstration application in the state of
Iowa, we find that the proposed Pareto optimization by short bursts
algorithm is not particularly sensitive to the burst size, and produces
relatively consistent results across multiple independent runs.

Future work should compare the performance of both the univariate and
multi-criteria short burst algorithms to competing redistricting
optimization approaches, including those of \citet{liu2016pear} and
\citet{swamy2022multiobjective}. Additional simulation studies examining
the scaling behavior of all of these algorithms in the number of
districts, number of precincts or counties, and number of redistricting
criteria (dimensionality of \(f\)) would be highly valuable as well.

\hypertarget{refs}{}

\begin{CSLReferences}{0}{0}\end{CSLReferences}

\appendix

\renewcommand\thefigure{\thesection\arabic{figure}}

\setcounter{figure}{0}

\section{Proofs of Propositions}\label{app:proofs}

\propconv*
\begin{proof}
Let $\xi \in P(f)$.
Since every pair of plans has strictly positive transition probability under $\M^b$, with probability 1 the algorithm started at any $A$ will eventually transition to $\xi$ within a burst of length $b$ (i.e., $\exists n : \xi \in X_n$).
Since $\xi\in P(f)$, at the end of the burst, $\xi$ will be added to the approximate Pareto frontier $X$.
Additionally, $\xi$ will never be removed from $X$, since no plan in $\Xi$ Pareto dominates it.
Since this holds for all $\xi\in P(f)$ (a finite set), with probability one all plans in $P(f)$ will eventually belong to $X$.
Then no other plan can belong to $X$, or else such a plan would not be Pareto dominated by any plan in $\Xi$, and thus would belong to $P(f)$ itself.
So $X_n\cvas P(f)$.
\end{proof}

\propirred*
\begin{proof}
Without loss of generality, number the plans of $\Xi$ such that \[ f(\xi_1)=\dots=f(\xi_{i_1})>f(\xi_{i_1+1})=\dots=f(\xi_{i_2})>f(\xi_{i_2+1})\dots f(\xi_{i_{b+1}})<f(\xi_{i_{b+1}+1})=\dots=f(\xi_{i_{b+2}})
\] with $f(\xi_{i_b+1})>f(\xi_1)$.
Then define $\M$ to be a random walk along this ordering; clearly $\M$ is irreducible.
However, initializing Algorithm \ref{alg:sbp} with $A=\{\xi_1\}$, any burst of length $b$ will return a plan $\xi_l$ with $1\le l\le i_{b+1}$, which will be Pareto dominated by one of $\{\xi_1, ... \xi_{i+1}\}$.
Since there are at least $b$ plans separating any of  $\{\xi_1, ... \xi_{i+1}\}$ from the Pareto frontier $P(f)=\{f(\xi_{i_{b+1}+1}),\dots,f(\xi_{i_{b+2}})\}$, no burst of length $b$ will transition into the Pareto frontier, and so $X\not\cvas P(f)$.
\end{proof}

We expect Proposition \ref{prop:irred} to hold for \(J>1\) as well.

\newpage

\section{Scaling of the Pareto frontier
size}\label{scaling-of-the-pareto-frontier-size}

Figure~\ref{fig-frontier-size} shows the scaling behavior of the size of
the Pareto frontier. We generate \(n\) samples from a
\(\mathcal{N}(0, I_{J\times J})\) distribution, for a range of \(n\) and
\(J\).

As both parameters increase, the size of the Pareto frontier grows
rapidly; as \(n\to\infty\) the growth is exponential in \(J\).

\begin{figure}

\centering{

\includegraphics{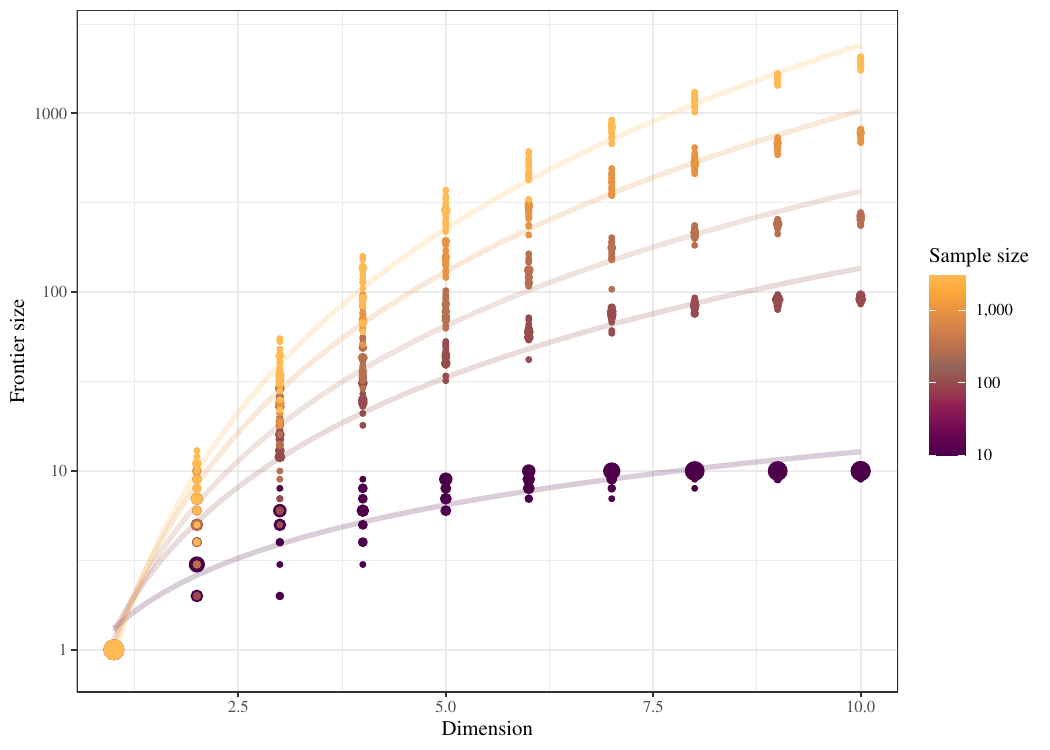}

}

\caption{\label{fig-frontier-size}Pareto frontier size for samples from
a multivariate Normal distribution, by dimension and sample size. Twenty
samples were generated for each combination of parameters.}

\end{figure}%


\end{document}